\theoremstyle{plain} 
\newtheorem{theorem}             {Theorem}  [section]
\newtheorem{proposition}[theorem]{Proposition}
\theoremstyle{definition}
\theoremstyle{remark}
\numberwithin{equation}{section}
\def\aand{\quad \textrm{and} \quad}
\begin{document}

\title[De-radicalization]{Global Stability in a  Mathematical Model of De-radicalization}
\keywords{Extremism; Mathematical sociology; Population model; Global  Stability; Lyapunov functions.\\
    \indent \textup{2010} \textit{Mathematics Subject Classification}:92D25, 91D99,  34D23}


\author[]{Manuele Santoprete\\
                Fei Xu\\
}
                \address{Department of Mathematics\\
                         Wilfrid Laurier University\\
                         Waterloo, ON, Canada}%


\begin{abstract}
    Radicalization is the process by which people come to adopt increasingly extreme political, social or religious ideologies. When radicalization leads to violence,  radical thinking becomes a threat to national security.  De-radicalization programs are part of an effort to combat violent extremism and terrorism. This type of initiatives  attempt to alter violent extremists  radical  beliefs and violent  behavior  with the aim to  reintegrate them into society.
    In this paper we introduce a simple compartmental model suitable to describe  de-radicalization programs. 
    The population is divided into four compartments: $ (S) $ susceptible, $ (E) $ extremists, $ (R) $ recruiters, and $ (T) $ treatment.
    We calculate the basic reproduction number $ \mathcal{R} _0 $. For $ \mathcal{R}  _0<  1 $ the system  has one globally asymptotically stable  equilibrium where no extremist or recruiters are present. For $ \mathcal{R}  _0 >1 $ the system  has an additional equilibrium where extremists and recruiters are endemic to the population. A Lyapunov function is used to show that, for $ \mathcal{R}  _0 >1 $,  the endemic equilibrium is  globally asymptotically stable. We use numerical simulations to support our analytical results. Based on our model we asses strategies to counter violent extremism.

  \end{abstract}
\maketitle

\begin{center} \today \end{center}

\section{Introduction}
According to Horgan \cite{horgan2009walking} radicalization is  the social and psychological process of incrementally
    experienced commitment to extremist political or religious ideology. Radicalization can lead to violent extremism and therefore it has become a major concern for national security. Typical counterterrorism strategies fall into two categories:
    \begin{enumerate}
        \item Law enforcement approach: violent extremist  are investigated prosecuted and imprisoned
        \item Military approach: violent extremists  are killed or captured on the battlefield.
    \end{enumerate}
    Practitioners of counterterrorism agree that these approaches alone cannot break the cycle of violence \cite{selim2016approaches}.    The realization of the inadequacy of the counterterrorism approach has lead to different strategies, collectively known as countering violent extremism (CVE).  CVE is a collection of noncoercive activities whose aim is to intervene in an individual's path toward violent extremism, to interdict  criminal activity and to reintegrate those convicted of criminal activity into society.
   CVE programs can be divided into three broad classes \cite{ohalloran2017challenges,selim2016approaches,mastroe2016surveying,clutterbuck2015deradicalization}
    \begin{enumerate}
        \item {\it Prevention programs}, which  seek to prevent the radicalization process from occurring and taking hold in the first place;
        \item {\it Disengagement programs}, which  attempt to stop or control radicalization as it is occurring;
        \item {\it De-radicalization programs}, which  attempt to alter an individual  extremist beliefs and violent  behavior  with the aim to  reintegrate him into society. This type of programs often target convicted terrorists.
    \end{enumerate}
According to Horgan \cite{horgan2014makes} there are at least 15 publicly known de-radicalization programs from Saudi Arabia to Singapore, but there are likely twice as many.
In this paper we use a compartmental model to model de-radicalization programs.

The attempt to use quantitative methods in describing social dynamics is not new, and compartmental models have been used to study various aspect of social dynamics. For instance
Hayward introduced a model of church growth \cite{hayward1999mathematical}, Jeffs et al. studied a model of political party growth  \cite{jeffs2016activist}, Romero et al. analyzed a model for the spread of political third parties \cite{romero2011epidemiological} and Crisosto et al. studied   the growth of cooperative learning in large communities \cite{crisosto2010community}.  The dynamics of the spread of crime
was studied by McMillon, Simon and Morenoff \cite{mcmillon2014modeling} and by Mohammad and Roslan \cite{mohammad2017analysis}. A mathematical model of the spread of gangs was studied  by
Sooknanan, Bhatt, and Comissiong \cite{sooknanan2016modified}. The same authors studied the model for the interaction of police and gangs in \cite{sooknanan2013catching}. Castillo-Chavez and Song analyzed the transmission dynamics of fanatic behaviors
\cite{castillo2003models}, Camacho  studied a model of the interaction between terrorist and fanatic groups  \cite{camacho2013development}, Nizamani, Memon and Galam modelled   public outrage and the spread of violence  \cite{nizamani2014public}.
Compartmental models of radicalization were studied by Galam and Javarone \cite{galam2016modeling} and by  McCluskey and Santoprete  \cite{mccluskey2017bare}.

In this paper we build on the compartmental  model introduced in \cite{mccluskey2017bare}  by adding a treatment compartment.  This allows us to  consider de-radicalization in our  analysis.
 We divide the population into four compartments, $ (S) $ susceptible, $ (E) $ extremists, $ (R) $ recruiters, and $ (T) $ treatment (see Figure \ref{fig:1}).
Using this simple model, we attempt to  test the effectiveness of de-radicalization programs in countering violent extremism. This is an important issue since, at least  on the surface, these de-radicalization programs are promising. In fact, these programs  appear to be  cost effective, since they are far cheaper than indefinite detention \cite{horgan2014makes}.  However, the degree of government support for these programs  hinges on their efficacy and, unfortunately, indicators of success and measures of effectiveness remain elusive \cite{ohalloran2017challenges}.

As in \cite{mccluskey2017bare} we use  the basic reproduction number $ \mathcal{R} _0 $ to  evaluate strategies for countering violent extremism.
 We will show that for $ \mathcal{R}  _0<  1 $ the system  has a globally asymptotically stable  equilibrium with no individuals in the extremist, recruiter and treatment classes, and that for  $ \mathcal{R}  _0 >1 $ the system  has an additional equilibrium in which extremists and recruiters are endemic to the population. The latter equilibrium is globally asymptotically stable for $ \mathcal{R} _0 >1 $. Therefore, if $ \mathcal{R} _0 <  1 $ the ideology will be eradicated, that is, eventually  the number of recruiters and extremists will go to zero. When $ \mathcal{R} _0 >  1 $ the ideology will become endemic, that is, the recruiters and extremists will establish themselves in the population.
In our model the basic reproduction number is
\begin{equation}
     \mathcal{R} _0 = \frac{ \Lambda } { \mu }  \frac {\beta (c _E q _E + b _E q _R - \frac{(1-k)\delta p _E } { b _T } q _R )} { b _E b _R - c _E c _R - \frac{ (1-k)\delta} { b _T } (c _E p _R + b _R p _E) },
\end{equation}
where   $ \mu $ is the mortality rate of the susceptible population,   $ k $ is the fraction of successfully de-radicalized individuals, and $ \delta $ is the rate at which individuals leave the treatment compartment, so that $1/\delta $ is the average time spent in the treatment compartment.
The fraction of extremists and recruiters entering the treatment compartment are $ p _E $ and $ p _R $, respectively. Moreover, $ b _E= \mu + d _E + c _E + p _E  $ and $ b _R = \mu + d _R + c _R + p _R   $, where   $ d _E $ and $ d _R $ are the additional mortality rates of the extremists and recruiters, respectively.\footnote{In the context of the present model these can be viewed as the rates at which extremists and recruiters are imprisoned with life sentences.}  Other parameters are described in Section 2.
Note that, if $ p _E, p _R \to 0  $, then the basic reproduction number limits to the one of the bare-bones model studied in  \cite{mccluskey2017bare}.

One approach to dealing with extremism, which follows under the umbrella of counterterrorism, is to prosecute and imprison violent extremists. This approach was studied in \cite{mccluskey2017bare}  where it was shown that  increasing the parameters $ d _E $ and $ d _R $ resulted in a decrease in $ \mathcal{R} _0 $. A similar results holds for the model studied in this paper.
A different strategy consists in  improving the de-radicalization programs by either increasing the success rate $k $ or by increasing the rates $p _E $  and $ p _R $ at which extremists and recruiters enter the $ T $ compartment.  Since $ \mathcal{R} _0 $ is a decreasing function of $k, p_E$, and $p_R $ , increasing these parameters  decreases $ \mathcal{R} _0 $. Hence, according to our model, this is a successful strategy to counter violent extremism. Another option is to decrease $\delta $, which in turn decreases $\mathcal{R} _0 $. This approach is also viable because $ \mathcal{R} _0 $ is an increasing function of $ \delta $. A good way of thinking about this is to consider prison-based de-radicalization programs, in which case, decreasing $\delta $ corresponds to increasing $ \frac{ 1 } { \delta } $,  the average prison sentence.

Note that, in general,  it may not be easy to determine the values of  parameters because available data are scarce. It has been claimed,  however, that the de-radicalization  program in Saudi Arabia, has a rate of recidivism of  about 10-20\% \cite{horgan2014makes}, which gives an estimate for the value of $k$.

The paper is organized as follows. In Section 2we introduce the mathematical model. In Section 3 we find an equilibrium with no individuals in the extremists, recruiters and treatment compartments. We also  compute the basic reproduction number using the next generation method. In Section 4 we use Lyapunov functions to prove this critical point is globally  asymptotically stable for $ \mathcal{R} _0 < 1 $. In Section 5 we find another equilibrium point, the {\it endemic equilibrium},  and  we prove it is globally asymptotically stable for $ \mathcal{R} _0 >1 $. In Section 6 we present some  numerical simulations supporting our analytical results. The final section concludes the paper with a short summary and discussion of the results, limitations of our model and ideas for future research.

\section{Equations}\label{sec:2}


\tikzstyle{block} = [rectangle, draw, fill=blue!20,
    text width=5em, text centered, rounded corners, minimum height=4em]
\tikzstyle{line} = [draw, -latex]
We model the spread of extreme ideology as a contact process. We assume that within the full population there is a subpopulation potentially at risk of adopting the ideology. We partition this   subpopulation  into four compartments:
\begin{enumerate}
    \item $(S)$  Susceptible
    \item  $(E)$ Extremists
    \item $ (R) $ Recruiters
    \item $ (T) $ Treatment.
\end{enumerate}
Our model is based on the bare-bones mathematical model of radicalization introduced in \cite{mccluskey2017bare}.
Here, however,  we also include a   treatment compartment $ (T) $, to  describe de-radicalized individuals.
The transfer diagram for this system is given below.

  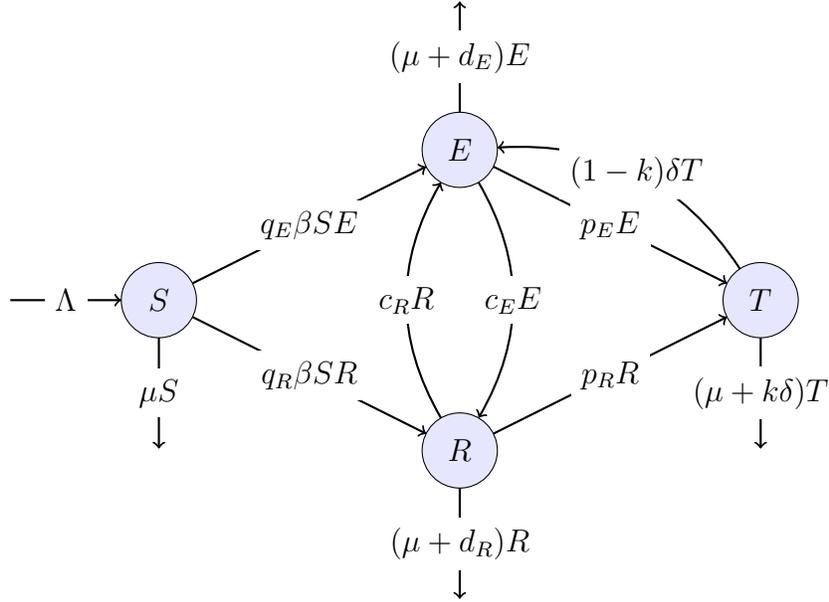
\begin{figure}[h]
\begin{tikzpicture}
\tikzset{VertexStyle/.style={circle,fill=blue!10,draw,minimum size=1cm,inner sep=5pt},
            }

   \Vertex[Math,L=S,x=0,y=0]{S}
   \Vertex[Math,L=E,x=4,y=2]{E}
   \Vertex[Math,L=R,x=4,y=-2]{R}
   \Vertex[Math,L=T,x=8,y=0]{T}
   \Vertex[empty,x=-2,y=0]{F0}
   \Vertex[empty,x=0,y=-2]{F1}
   \Vertex[empty,x=4,y=4]{F2}
   \Vertex[empty,x=4,y=-4]{F3}
   \Vertex[empty,x=8,y=-2]{F4}
   \tikzset{EdgeStyle/.append style={->}}
   \Edge[label = $q_E\beta SE$](S)(E)
   \Edge[label = $q_R\beta SR$](S)(R)
   \Edge[label = $c_R R$,style={bend left}](R)(E)
   \Edge[label = $c_E E$, style={bend left}](E)(R)
   \Edge[label = $p_R R$](R)(T)
   \Edge[label = $p_E E$](E)(T)
   \Edge[label = $(1-k)\delta T$, style={bend right}](T)(E)
   \Edge[label = $\Lambda$](F0)(S)
   \Edge[label = $\mu S$](S)(F1)
   \Edge[label = $(\mu+d_E) E$](E)(F2)
   \Edge[label = $(\mu+d_R) R$](R)(F3)
   \Edge[label = $(\mu+k\delta) T$](T)(F4)

    \end{tikzpicture}

\caption{Transfer diagram for the de-radicalization model.\label{fig:1}}
\end{figure}

We assume that susceptibles and recruiters interact according to a mass action law, and that the rate at
which susceptibles are recruited to adopt the extremist ideology is proportional to the number of interactions
that are occurring.  Thus, susceptibles are recruited at rate $\beta S R$, with a fraction $q_E$
entering the extremist class and a fraction $q_R = 1 - q_E$ entering the recruiter class. Extremists switch to the recruiter class with rate constant $ c_E $, while recruiters enter the extremist
class with rate constant $ c_R $. The natural death rate is proportional to the population
size, with rate constant $ \mu $. Extremists and recruiters have additional death rates $ d  _E $ and $d _R$, respectively. These rates account for individuals that are imprisoned for life or killed.
To consider individuals that undergo de-radicalization program, extremists and recruiters are made to  enter the treatment compartment at rate constants $ p _E $ and $ p _R $ respectively.
The rate at which a treated individual leaves the compartment $ T $ is $ \delta $.
A fraction $ k \in [0,1]$ of treated individuals is removed, since we assume that successfully treated individuals are permanently de-radicalized. This seems to be  a reasonable assumption since, according to Horgan \cite{horgan2014makes}, individuals who leave terrorism behind have a low chance of re-engagement.
The  fraction of individuals for which the de-radicalization program fails is $1-k$. These individuals enter   the extremist class $ E $  after being treated.
Thus, the radicalization model consists of the following differential equations together with
non-negative initial conditions:
\begin{equation}\label{eqn:modela}
 \begin{aligned}
 S' & = \Lambda - \mu S - \beta SR		\\
 E' & = q_E\beta SR - (\mu + d_E + c_E+ p _E ) E + c_R R +(1-k)\, \delta T		\\
 R' & = q_R \beta SR + c_E E - (\mu + d_R + c_R+p _R ) R\\
T' & = p _E E + p _R R - (\mu +\delta)T
 \end{aligned}
\end{equation}
where $ q _E + q _R = 1 $, $ q _E , q _R \in [0,1] $. For simplicity denote $ b _E = \mu + d_E + c_E+ p _E $, $ b _R =  \mu + d_R + c_R+ p _R  $ and $ b _T = \mu + \delta $, then system \eqref{eqn:modela} takes the following form:
\begin{equation}\label{eqn:modelb}
 \begin{aligned}
 S' & = \Lambda - \mu S - \beta SR		\\
 E' & = q_E\beta SR - b _E  E + c_R R +(1-k) \, \delta T		\\
 R' & = q_R \beta SR + c_E E - b _R  R\\
T' & = p _E E + p _R R - b _T T
 \end{aligned}
\end{equation}

\begin{proposition}\label{prop:invariant_region}
 The  region  $ \Delta = \left \{ (S, E , R,T) \in \mathbb{R}  ^4 _{ \geq 0 } : S +E+R+T \leq \frac{ \Lambda } { \mu } \right \} $ is a compact  positively invariant set   for the flow of \eqref{eqn:modela} (i.e. all  solutions starting in $ \Delta  $ remain in $ \Delta   $ for all $ t> 0 $). Moreover, $ \Delta $ is attracting within $ \mathbb{R}  ^4 _{ \geq 0 } $ (i.e.  solutions starting outside $ \Delta $ either enter or approach $ \Delta $ in the limit.

\end{proposition}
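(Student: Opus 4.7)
The plan is to prove the three claims in sequence: nonnegativity of each coordinate, the bound $S+E+R+T \leq \Lambda/\mu$, and compactness/attractivity.

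First I would show that the closed nonnegative orthant $\mathbb{R}^4_{\geq 0}$ is positively invariant by checking the vector field on each coordinate face. Inspecting \eqref{eqn:modela}: on $\{S=0\}$ we have $S'=\Lambda>0$; on $\{E=0\}$ we have $E' = q_E\beta SR + c_R R + (1-k)\delta T \geq 0$ whenever the other coordinates are nonnegative; on $\{R=0\}$ we have $R' = c_E E \geq 0$; and on $\{T=0\}$ we have $T' = p_E E + p_R R \geq 0$. Hence no trajectory starting in $\mathbb{R}^4_{\geq 0}$ can cross any coordinate hyperplane outward, so the orthant is positively invariant.

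Next I would estimate the total population $N=S+E+R+T$. Summing the four equations in \eqref{eqn:modela} produces a telescoping cancellation of the $\beta SR$, $c_E E$, $c_R R$, $p_E E$, $p_R R$, and $(1-k)\delta T$ internal flux terms, leaving
\begin{equation*}
N' = \Lambda - \mu N - d_E E - d_R R - k\delta T \leq \Lambda - \mu N,
\end{equation*}
valid on $\mathbb{R}^4_{\geq 0}$. By the standard differential inequality/Gr\"onwall argument, this gives
\begin{equation*}
N(t) \leq \frac{\Lambda}{\mu} + \left(N(0) - \frac{\Lambda}{\mu}\right) e^{-\mu t}.
\end{equation*}
If $N(0)\leq \Lambda/\mu$, then $N(t)\leq \Lambda/\mu$ for all $t\geq 0$, so combined with the nonnegativity above, $\Delta$ is positively invariant. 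If $N(0)>\Lambda/\mu$, the same estimate shows $N(t)\to \Lambda/\mu$ monotonically from above, so the trajectory approaches $\Delta$; and in fact since the bound is strict as soon as $N$ drops to $\Lambda/\mu$ the trajectory enters $\Delta$. This proves attractivity within $\mathbb{R}^4_{\geq 0}$.

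Finally, $\Delta$ is the intersection of the closed nonnegative orthant with the closed half-space $\{N\leq \Lambda/\mu\}$, so it is closed; it is also bounded (every coordinate lies in $[0,\Lambda/\mu]$), hence compact. There is no real obstacle here — the only small subtlety is making sure the cancellations in the $N'$ calculation are done carefully so that the $(1-k)\delta T$ term from $E'$ combines with the $-(\mu+\delta)T$ from $T'$ and the $-k\delta T$ that remains is nonnegative, giving the clean inequality $N'\leq \Lambda-\mu N$ used above.
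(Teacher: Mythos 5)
Your proposal is correct and follows essentially the same route as the paper's proof: checking the vector field on the coordinate faces to get positive invariance of $\mathbb{R}^4_{\geq 0}$, summing the equations to obtain $N' \leq \Lambda - \mu N$, and applying the standard comparison estimate to conclude both invariance and attractivity of $\Delta$. The only (immaterial) overstatement is that $N(t)$ need not converge to $\Lambda/\mu$ monotonically --- the displayed bound only gives $\limsup_{t\to\infty} N(t) \leq \Lambda/\mu$, which is exactly what the attractivity claim requires.
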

\begin{proof}
    It is trivial to check that $ \Delta $ is compact. We first show that $ \mathbb{R}  ^4 _{ \geq 0 } $ is positively invariant  by  checking  the direction of the vector field along the boundary of  $ \mathbb{R}  ^4 _{ \geq 0 } $. Along $ S = 0 $ we have $ S ' = \Lambda>0 $ so the vector field points inwards.
    Along $ E = 0 $ we have $ E' = q _E \beta S R + c _R R + (1 - k) \delta T \geq 0 $, provided $ R, S , T \geq 0 $. Moreover, along $ R = 0 $, we have that  $ R' = c _E E \geq 0 $ provided $ E \geq 0 $. Moreover,
  along $ T = 0 $ we have $ p _E E + p _R R \geq 0 $, provided $ E, R \geq 0 $. This shows that  $ \mathbb{R}  ^4 _{ \geq 0 } $ is positively invariant by Proposition 2.1 in \cite{haddad2010nonnegative}.
Now let $ N = S + E + R + T $, then
 \[ S'+E'+R'+T'=\Lambda - \mu N - d _E E - d _R R -  k\delta T \leq \Lambda - \mu N.\]
Using a standard comparison theorem, it follows that
\begin{equation}\label{eqn:ineq}
    N (t) \leq \left( N (0) - \frac{ \Lambda } {\mu } \right) e ^{ - \mu t } + \frac{ \Lambda } { \mu },
\end{equation}
for $ t \geq 0 $. Thus, if $ N (0) \leq \frac{ \Lambda } { \mu } $, then $ N (t) \leq \frac{ \Lambda } { \mu } $ for all $ t \geq 0 $. Hence, the set  $ \Delta $ is positively invariant. Furthermore, it follows from \eqref{eqn:ineq} that $ \limsup_{t\to \infty } N \leq \frac{ \Lambda } { \mu } $, demonstrating that $ \Delta $ is attracting within $ \mathbb{R}  ^4 _{ \geq 0 } $.
\end{proof}

\section{Radicalization-free equilibrium and basic reproduction number $ \mathcal{R} _0 $}\label{sec:3}

If $ E = R =T= 0 $, then an equilibrium is given by $ x _0 = \left( S _0 , E _0 , R _0 , T _0  \right)  = \left( \frac{ \Lambda } { \mu } ,0,0,0 \right) $.

The basic reproduction number $\mathcal{R}_0 $  is the spectral radius of the next generation matrix $G $
calculated at $ x _0 $. $\mathcal{R}_0$ can be calculated as follows (see \cite{van2002reproduction} for
more details).  In our case the infected compartments are $E,R,T  $.
The next generation matrix is given by $ G  = F V ^{ - 1 }$ with
\[
    F = \begin{bmatrix}
        \frac{ \partial \mathcal{F}  _E } { \partial E }  & \frac{ \partial \mathcal{F}  _E } { \partial R }  &  \frac{ \partial \mathcal{F}  _E } { \partial T }\\[1em]
        \frac{ \partial \mathcal{F}  _R } { \partial E }  & \frac{ \partial \mathcal{F}  _R } { \partial R } &  \frac{ \partial \mathcal{F}  _R } { \partial T }\\[1em]
        \frac{ \partial \mathcal{F}  _T } { \partial E }  & \frac{ \partial \mathcal{F}  _T } { \partial R } &  \frac{ \partial \mathcal{F}  _T } { \partial T }\\

    \end{bmatrix} (x_0)
  \aand
  V = \begin{bmatrix}
        \frac{ \partial \mathcal{V}  _E } { \partial E }  & \frac{ \partial \mathcal{V}  _E } { \partial R }  &  \frac{ \partial \mathcal{V}  _E } { \partial T }  \\[1em]
        \frac{ \partial \mathcal{V}  _R } { \partial E }  & \frac{ \partial \mathcal{V}  _R } { \partial R } &  \frac{ \partial \mathcal{V}  _R } { \partial T }\\[1em]
         \frac{ \partial \mathcal{V}  _T } { \partial E }  & \frac{ \partial \mathcal{V}  _T } { \partial R } &  \frac{ \partial \mathcal{V}  _T } { \partial T }
    \end{bmatrix} (x_0).
\]
Here, $ \mathcal{F}  _E $, $ \mathcal{F}  _R $ and $ \mathcal{F}  _T $  are the rates of appearance of newly radicalized individuals
in the classes $ E $, $ R $, and $ T $,  respectively. Let $ \mathcal{V}  _j = \mathcal{V}  _j ^{ - } - \mathcal{V}  _j ^{ + } $,  with $ \mathcal{V}  _j ^{ + } $  is the rate of
transfers of individuals into class $ j $    by all other means, and $ \mathcal{V}  _j ^{ - } $ is the rate of transfers of individuals out of class $j$, where $ j \in \{ E,R,T \}$.
In our case
\[
    \mathcal{F} = \begin{bmatrix}
        \mathcal{F}  _E   \\
        \mathcal{F}  _R\\
        \mathcal{F}  _T
    \end{bmatrix} =\beta S
    \begin{bmatrix}
       q _E R\\
       q _R R\\
       0
  \end{bmatrix}
\]
and
\[
\mathcal{V} =  \begin{bmatrix}
          \mathcal{V}  _E   \\
          \mathcal{V}  _R
    \end{bmatrix} =
    \begin{bmatrix}
     b _E E - c _R R - (1 - k) \delta T    \\
     b _R R - c _E E\\  
     b _T T - (p _E E + p _R R)
  \end{bmatrix}.
\]
Hence
\[
    F =\beta S _0 \begin{bmatrix}
         0&   q _E & 0 \\
         0 & q _R  & 0\\
         0 & 0 & 0
    \end{bmatrix}
\aand
   V =  \begin{bmatrix}
       b _E & - c _R & - \alpha _E  \\
       - c _E & b _R  & 0\\
       - p _E & - p _R & b _T
  \end{bmatrix}.
\]

Therefore,
\begin{align*}
    G & =\frac{S _0 \beta
           }{\tilde D}
 \begin{bmatrix}
        0  &  q_E & 0  \\
        0& q _R & 0 \\
        0 & 0 & 0
    \end{bmatrix}
 \begin{bmatrix}
       - b _R b _T  & -(\alpha _E p _R + c _R b _T )& \alpha _E b _R  \\
      -c _E b _T  & \alpha _E p _E - b _E b _T & - \alpha _E c _E \\
     - b _R p _E - c _E p _R & - b _E p _R - c _R p _E & - b _E b _R + c _R c _E
  \end{bmatrix}\\
 & = \frac{ \beta S _0 } { \tilde D } \begin{bmatrix}
    -q _E c _E b _T   & q _E(\alpha _E p _E - b _E b _T ) & - q _E \alpha _E c _E  \\
   -q _R c _E b _T   & q _R( \alpha_E p _E - b _E b _T) & - q _R \alpha _E c _E \\
  0 & 0 & 0
   \end{bmatrix},
   \end{align*}
where $\tilde D =   \alpha _E (b _R p _E + c _E p _R) + b _T (c _E c _R - b _E b _R)$.
Note that $F$ has rank $1$ and so the same is true for $G$.  Since two eigenvalues of $G$ are
zero the spectral radius is equal to the absolute value of the remaining eigenvalue.  Since the
trace is equal to the sum of the eigenvalues and there is only one non-zero eigenvalue, we
see that the spectral radius of $G$ is equal to the absolute value of the trace (which happens
to be positive).  Thus,
\begin{equation}				\label{Rzero}
     \mathcal{R} _0 =  \frac {\beta S _0  (c _E q _E + b _E q _R - \frac{ \alpha _E p _E } { b _T } q _R )} { b _E b _R - c _E c _R - \frac{ \alpha _E } { b _T } (c _E p _R + b _R p _E) }. 
\end{equation}

\section{Global Asymptotic Stability of  $ x _0 $ for $ \mathcal{R} _0 < 1 $}\label{sec:4}
In this section, we investigate the stability of the critical point $ x _0 $. The next generation method provides us with information on the local stability: $ x _0 $ is locally asymptotically stable for  $ \mathcal{R} _0 < 1 $ and unstable if $ \mathcal{R} _0 > 1 $.
The global asymptotical stability of $ x _0 $, instead,  is given by the following theorem.
\begin{theorem}\label{thm:x0}
   If $ \mathcal{R} _0 \leq 1 $ then $ x _0 $ is globally asymptotically stable on  $ \mathbb{R}  ^4 _{ \geq 0 }$.
\end{theorem}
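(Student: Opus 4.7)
The plan is to exhibit a linear Lyapunov function $L = aE + bR + cT$ on $\Delta$ with positive coefficients $a,b,c$ chosen so that $L'$ collapses to a scalar multiple of $R\,(\mathcal{R}_0 - 1)$, and then to invoke LaSalle's invariance principle together with the attractivity of $\Delta$ established in \propref{prop:invariant_region}.

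Computing $L'$ along trajectories of \eref{eqn:modelb} gives a linear combination of $\beta SR$, $E$, $R$, and $T$. My strategy is to kill the $E$- and $T$-coefficients by solving two linear equations: setting the $T$-coefficient to zero yields $c = a(1-k)\delta / b_T$, and then setting the $E$-coefficient to zero yields $b = a\bigl(b_E b_T - (1-k)\delta p_E\bigr)/(c_E b_T)$. Since $(1-k)\delta \leq b_T$ and $p_E < b_E$, both $b$ and $c$ are strictly positive for any $a > 0$, so $L$ is a legitimate Lyapunov candidate. Conceptually, $(a,b,c)$ is (up to scaling) a positive left eigenvector of $V - F$, which explains why this choice works.

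I expect that after these substitutions the coefficient of $\beta SR$ reduces to $aN/c_E$ and the coefficient of $R$ reduces to $-aD/c_E$, where $N$ and $D$ are respectively the numerator and denominator of the expression for $\mathcal{R}_0$ in \eref{Rzero}. This yields $L' = \tfrac{aR}{c_E}(N \beta S - D)$; combined with the bound $S \leq \Lambda/\mu = S_0$ on $\Delta$ from \propref{prop:invariant_region}, this gives
\[
L' \;\leq\; \frac{aR\,D}{c_E}\,(\mathcal{R}_0 - 1) \;\leq\; 0
\]
whenever $\mathcal{R}_0 \leq 1$. The main technical hurdle I anticipate is the algebraic identity that produces exactly the factors $N$ and $D$ from the $\beta SR$- and $R$-coefficients; this is where the structure of $\mathcal{R}_0$ must be matched term-by-term.

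To conclude, I will apply LaSalle's invariance principle. The locus $\{L' = 0\}$ forces $R = 0$ when $\mathcal{R}_0 < 1$, and $R = 0$ or $S = S_0$ when $\mathcal{R}_0 = 1$; in either case, invariance propagates $R \equiv 0$ into $E \equiv 0$ via $R' = c_E E$, then $T \equiv 0$ via $T' = -b_T T$, and finally $S \to \Lambda/\mu$, so the largest invariant subset of $\{L' = 0\}$ in $\Delta$ is the singleton $\{x_0\}$. Hence every orbit in $\Delta$ converges to $x_0$, and the attractivity of $\Delta$ in $\mathbb{R}^4_{\geq 0}$ from \propref{prop:invariant_region} extends this convergence to all non-negative initial data, yielding the claimed global asymptotic stability.
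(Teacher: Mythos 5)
Your proposal is correct and follows essentially the same route as the paper: your coefficients $(a,b,c)$ with $c = a(1-k)\delta/b_T$ and $b = a\bigl(b_E b_T - (1-k)\delta p_E\bigr)/(c_E b_T)$ are, upon taking $a = b_T c_E$, exactly the paper's Lyapunov function $U = b_T c_E E + (b_T b_E - \alpha_E p_E)R + \alpha_E c_E T$, and the collapse of $U'$ to a multiple of $R(\mathcal{R}_0 S/S_0 - 1)$, the bound $S \leq S_0$ on $\Delta$, and the LaSalle argument identifying $\{x_0\}$ as the largest invariant subset of $\{U'=0\}$ all match. The only cosmetic difference is that you motivate the coefficients as a positive left eigenvector of $V-F$ rather than writing them down directly.
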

\begin{proof}
   Consider the following $ C^1 $ Lyapunov function
   \[
       U = b _T c _E E + (b _T b _E - \alpha _E p _E) R + \alpha _E c _E T.
   \]
 Evaluating  the time derivative of $ U $ along the trajectories of \eqref{eqn:modelb} yields
\begin{align*}
    U'  =&  b _T c _E E' + (b _T b _E - \alpha _E p _E) R' + \alpha _E c _E T'\\
    =&  b _T c _E (q _E \beta SR - b _E E + c _R R + \alpha _E T) + (b _T b _E - \alpha _E p _E) (q _R \beta SR + c _E E - b _R R)\\
    & + \alpha _E c _E (p _E E + p _R R - b _T T)  \\
    = &  b _T \left[ \beta (q _E c _E + q _R b _E - q _R \frac{ \alpha _E } { b _T } p _E) S -
        \left( b _E b _R - c _E c _R - \frac{ \alpha _E } { b _T } (p _E b _R + c _E p _R)\right)  \right] R    \\
    = &   b _T D\left[ \beta (q _E c _E + q _R b _E - q _R \frac{ \alpha _E } { b _T } p _E) \frac{  S}{D} - 1
         \right] R    \\
    = &  b _T D\left[ \mathcal{R} _0 \frac{  S}{S _0 } - 1\right] R
\end{align*}
where $ D = b _E b _R - c _E c _R - \frac{ \alpha _E } { b _T } (p _E b _R + c _E p _R) $. It follows from $ S \leq S _0 = \frac{ \Lambda } { \mu } $ that
\begin{align*}
   U ' \leq & \,   b _T D\left[ \mathcal{R} _0  - 1\right] R
\end{align*}
which implies that $ U'
\leq 0 $ if $ \mathcal{R} _0 \leq 1 $. Furthermore, $ U ' =0 $ if and only if $ \mathcal{R} _0 = 1 $ or  $ R = 0 $.  Let
\[ Z= \{ (S,E, R,T) \in \Delta |\,U' = 0 \}.\]
We claim that the largest invariant set contained in $ Z $ is $ x _0 $.
In fact,  any entire solution $ (S(t), E(t), R(t) ,T(t)) $ contained in $ Z $  must have $ R (t) \equiv 0 $ as a consequence of the expression for $ U' $ given above. Moreover, from the second and third line in \eqref{eqn:modelb} it follows that  $ E (t)\equiv 0 $ and  $ T (t) \equiv 0 $. Substituting $ R = T = 0 $ in the first line of \eqref{eqn:modelb} gives a differential equation with solution $ S = \left(  S (0) - \frac{ \Lambda } { \mu }\right)   e ^{ - \mu t } + \frac{ \Lambda } { \mu }$. Clearly, if $ S(0) \leq \frac{ \Lambda } { \mu } $, then $ S\to - \infty $ as $ t \to - \infty $ and the corresponding entire solution is not contained in $ Z $. It follows that $ S (0) = \frac{ \Lambda } { \mu } $, which proves the claim.

Since $ \Delta $ is  positively invariant with respect to \eqref{eqn:modelb}  LaSalle's invariance principle (\cite{khalil1996noninear} Theorem 4.4 or \cite{la1976stability} Theorem 6.4) implies that  all trajectories that start in $ \Delta $ approach  $ x _0 $  when $ t \to \infty $. 
This together with the fact that $ x _0 $ is Lyapunov stable (in fact is locally asymptotically stable by the next generation method), prove  that $ x _0 $ is globally asymptotically stable in $ \Delta $.
Since  $ \Delta $  is an attracting set within $ \mathbb{R}  ^4 _{ \geq 0 } $  the stability is also global in $ \mathbb{R}  ^4 _{ \geq 0 } $.


\end{proof}
\section{Global Asymptotic Stability of the Endemic Equilibrium}\label{sec:5}
In this section,  we show that if $ \mathcal{R} _0 >1 $, then \eqref{eqn:modelb} has a unique endemic equilibrium. We then study the global asymptotic stability of such equilibrium using Lyapunov functions.

An endemic equilibrium $ x ^\ast = (S ^\ast , E ^\ast , R ^\ast, T ^\ast ) \in \mathbb{R} ^4_{>0} $ of \eqref{eqn:modelb} is an equilibrium in which at least one of $ E ^\ast , R ^\ast $ and $ T ^\ast $ is nonzero.
To find the endemic equilibria of \eqref{eqn:modelb} we first set $ T' = 0 $, from which we obtain $ T = \frac{ p _E } { b _T } E ^\ast  + \frac{ p _R } { b _T } R ^\ast$. Using the expression above for $ T ^\ast $, setting $ E' = R' = 0 $ and  treating $ S ^\ast $ as a parameter
yields the linear system
\begin{equation}
    \begin{bmatrix}
    - b _E + \frac{ p _E } { b _T } (1 - k)  \delta   & q _E \beta S ^\ast + c _R + \frac{ p _R } { b _T } (1 - k) \delta \\
   c _E & q _R \beta S ^\ast - b _R 
  \end{bmatrix} \begin{bmatrix}
    E ^\ast   \\
     R ^\ast
\end{bmatrix}  = \begin{bmatrix}
    0\\
    0
\end{bmatrix}
\end{equation}
In order to have non-zero solutions for $ E ^\ast $ and $ R ^\ast $, the coefficient matrix must have determinant zero. This gives
\begin{equation}\label{eqn:Sstar}
    S ^\ast = \frac{ b _E b _R - c _E c _R - \frac{ \alpha _E } { b _T } (c _E p _R + b _R p _E) } {
       \beta (c _E q _E + b _E q _R - \frac{ \alpha _E p _E } { b _T } q _R )} = \frac{ \Lambda } { \mu } \frac{ 1 } { \mathcal{R} _0 },
\end{equation}
where $ \alpha _E = (1 - k) \delta $.
Solving the third equation for $ E ^\ast $ yields
\[
    E ^\ast = \omega R ^\ast, \quad \mbox{ with } \quad \omega = \frac{ b _R - q _R \beta S ^\ast } { c _E }.
\]
Next, taking the last line in \eqref{eqn:modelb} and solving for $ T ^\ast $ gives
\[
    T ^\ast = \frac{ p _E \omega + p _R } { b _T } R ^\ast.
\]
Substituting this last expression in the first line of \eqref{eqn:modelb} we obtain
\[
    R ^\ast = \frac{ \Lambda - \mu S ^\ast } {\beta S ^\ast } = \frac{ \mu}{\beta } (\mathcal{R} _0 - 1).
\]
It follows that  a meaningful  endemic equilibrium with positive  $ S ^\ast , E ^\ast , R ^\ast $, and $ T ^\ast $  exists if and only if $ \mathcal{R} _0 > 1 $. When the endemic equilibrium exists, there is only one, denoted by   $ x ^\ast = (S ^\ast , E ^\ast , R ^\ast, T ^\ast) $, where
\begin{equation}
    \begin{aligned}
          S ^\ast & = \frac{ \Lambda } { \mu } \frac{ 1 } { \mathcal{R} _0 },\\
          E ^\ast & = \omega R ^\ast\\
          R ^\ast & = \frac{ \mu}{\beta } (\mathcal{R} _0 - 1)\\
          T ^\ast & = \frac{ p _E \omega + p _R } { b _T } R ^\ast.
    \end{aligned}
\end{equation}

\begin{theorem}\label{thm:x_ast} If $ \mathcal{R} _0 >1 $, then the endemic equilibrium $ x ^\ast $  of \eqref{eqn:modelb} is globally asymptotically stable in $ \mathbb{R}  ^4 _{ >0 }  $.
\end{theorem}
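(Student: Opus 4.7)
The plan is to exhibit a Volterra-type (Goh--Volterra) Lyapunov function and then invoke LaSalle's invariance principle, in the spirit of the proof of \thmref{thm:x0} but centred now at the endemic equilibrium. I would define
\[
V(S,E,R,T) = S^{*}\,g\!\left(\tfrac{S}{S^{*}}\right) + a_E E^{*}\,g\!\left(\tfrac{E}{E^{*}}\right) + a_R R^{*}\,g\!\left(\tfrac{R}{R^{*}}\right) + a_T T^{*}\,g\!\left(\tfrac{T}{T^{*}}\right),
\]
where $g(u)=u-1-\ln u\ge 0$ with equality iff $u=1$, and $a_E,a_R,a_T>0$ are constants to be determined. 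On $\mathbb{R}^4_{>0}$ the function $V$ is $C^1$ and positive definite about $x^{*}$.

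Next, I would differentiate $V$ along \eqref{eqn:modelb} and use the four equilibrium identities
\[
\Lambda=\mu S^{*}+\beta S^{*}R^{*},\qquad b_E E^{*}=q_E\beta S^{*}R^{*}+c_R R^{*}+\alpha_E T^{*},
\]
together with $b_R R^{*}=q_R\beta S^{*}R^{*}+c_E E^{*}$ and $b_T T^{*}=p_E E^{*}+p_R R^{*}$, to eliminate the rate constants $\Lambda, b_E, b_R, b_T$ in favour of equilibrium products. Setting $x=S/S^{*}$, $y=E/E^{*}$, $z=R/R^{*}$, $w=T/T^{*}$, the derivative reorganises into
\[
V' = -\mu S^{*}\!\left(2-x-\tfrac{1}{x}\right) + \sum_{\alpha} \kappa_\alpha\!\left(n_\alpha - \textstyle\sum_{i}\rho_{\alpha,i}\right),
\]
where each $\rho_{\alpha,i}$ is a product of $x,y,z,w$ and their reciprocals obeying $\prod_i\rho_{\alpha,i}=1$, and the positive weights $\kappa_\alpha$ depend on $a_E,a_R,a_T$ and the equilibrium data. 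Every such bracket is then $\le 0$ by the arithmetic--geometric mean inequality.

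The main obstacle is selecting $a_E, a_R, a_T$. The transfer diagram contains two interlocking cycles through the infected compartments -- the exchange $E\leftrightarrow R$ driven by $c_E,c_R$ and the loop $E,R\to T\to E$ driven by $p_E,p_R,\alpha_E$ -- and the coefficients must be tuned so that the stray linear-in-ratio terms produced by these cross-couplings recombine into the closed ratio cycles above. This amounts to a small linear system in $(a_E,a_R,a_T)$; a positive solution should exist whenever the coordinates of $x^{*}$ are strictly positive, i.e.\ whenever $\mathcal{R}_0>1$. When the direct bookkeeping becomes unwieldy one can appeal to the graph-theoretic construction of Guo--Li--Shuai, which builds such coefficients from cofactors of the weighted Laplacian of the compartmental transfer graph.

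Finally, $V'\le 0$, and equality forces $x=1$, i.e.\ $S=S^{*}$. On any invariant subset of $\{V'=0\}$ one has $S\equiv S^{*}$; then $S'\equiv 0$ forces $R\equiv R^{*}$, the $R$-equation together with $c_E>0$ forces $E\equiv E^{*}$, and $T'\equiv 0$ forces $T\equiv T^{*}$. Hence the largest invariant set in $\{V'=0\}\cap\Delta$ is $\{x^{*}\}$, and LaSalle's invariance principle gives global asymptotic stability of $x^{*}$ in $\Delta\cap\mathbb{R}^4_{>0}$. Because $\mathbb{R}^4_{>0}$ is forward invariant under \eqref{eqn:modelb} and $\Delta$ attracts all orbits in $\mathbb{R}^4_{\ge 0}$ by \propref{prop:invariant_region}, this extends to global asymptotic stability in $\mathbb{R}^4_{>0}$ as claimed.
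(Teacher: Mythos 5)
Your proposal follows essentially the same route as the paper: a Goh--Volterra Lyapunov function $V=S^*g(S/S^*)+\sum_i a_i X_i^* g(X_i/X_i^*)$ with $g(u)=u-1-\ln u$, a rearrangement of $V'$ into cycle brackets that are nonpositive by the arithmetic--geometric mean inequality, coefficients determined by matching like terms, and LaSalle's invariance principle. The one step you defer --- verifying that the linear system for the $a_i$ and the cycle weights actually admits a solution with $a_i>0$ and nonnegative weights --- is precisely what the paper carries out explicitly (thirteen matching equations, solved in closed form with $a_1,a_2,a_3>0$ and $b_1,\dots,b_5>0$), so your sketch is correct in outline and coincides with the paper's proof.
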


\begin{proof}
We study the global stability of $ x ^\ast $ by considering the Lyapunov function
\[
    V = S ^\ast g \left( \frac{ S } { S ^\ast } \right) + a _1  \, E ^\ast g \left( \frac{ E } { E ^\ast } \right) + a _2 \, R ^\ast g \left( \frac{ R } { R ^\ast } \right)+ a _3  \, T^*g \left( \frac{ T } { T ^\ast } \right)
\]
where $ g (x) = x - 1 - \ln{x}$. Clearly $ V $ is $ C ^1 $, $ V (x ^\ast) = 0 $, and $ V>0 $ for
any $ p \in \mathbb{R} ^4 _{ >0 }$ such that  $ p \neq x ^\ast $.

Differentiating $ V $  along solutions of \eqref{eqn:modelb} yields
\begin{align*}
   V'  = &  \left( 1 - \frac{ S ^\ast } { S } \right) S' + a_1\left( 1 - \frac{ E ^\ast } { E } \right) E' + a_2 \left( 1 - \frac{ R ^\ast } { R }\right) R' + a_3\left( 1 - \frac{ T ^\ast } { T } \right) T'\\
  =&   \left( 1 - \frac{ S ^\ast } { S } \right) [\Lambda  - \mu S - \beta S R ]+   a_1 \left( 1 - \frac{ E ^\ast } { E } \right)[ q _E \beta S R - b _E E + c _R R + \alpha _E T]\\
  & +  a_2 \left( 1 - \frac{ R ^\ast } { R }\right)[q _R \beta S R + c _E E - b _R R]+ a_3 \left( 1 - \frac{ T ^\ast } { T } \right) [p _E E + p _R R - b _T T]\\
   =&  C -(\mu +a _2 \beta q _R R ^\ast )S +(a _1  q _E + a _2  q _R - 1) \beta SR +(-a _1 b _E + a _2 c _E + a _3 p _E )E\\
  & +(S ^\ast \beta + a _1 c _R - a _2 b_R + a _3 p _R )R + ( a _1 \alpha _E - a _3 b _T ) T -\Lambda \frac{ S ^\ast } { S } - a _3 p _E \frac{ T ^\ast } { T }E - a _2 c _E \frac{ R ^\ast } { R }E\\
 &  - a _3 p _R \frac{ T ^\ast } { T }R  - a _1 \alpha _E E ^\ast \frac{ T } { E } - a _1 c _R E ^\ast \frac{ R } { E } - a _1 \beta q _E E ^\ast \frac{ SR } { E }
\end{align*}
where $ C = \Lambda + \mu S ^\ast + a _1 b _E  E ^\ast + a _2 b _R R ^\ast + a _3 b _T T ^\ast $.
For simplicity, denote $ w = \frac{ S } { S ^\ast } $, $ x = \frac{ E } { E ^\ast } $, $ y = \frac{ R } { R ^\ast } $, and $ z = \frac{ T } { T ^\ast } $.
Then,
\begin{align*}
   V' = &   C -(\mu +a _2 \beta q _R R ^\ast )S ^\ast w  +(a _1  q _E + a _2  q _R - 1) \beta S ^\ast R^\ast wy +(-a _1 b _E + a _2 c _E + a _3 p _E )E ^\ast x\\
  & +(S ^\ast \beta + a _1 c _R - a _2 b_R + a _3 p _R )R ^\ast y + ( a _1 \alpha _E - a _3 b _T ) T ^\ast z  -\Lambda \frac{ 1 } { w } - a _3 p _E E^\ast  \frac{ x } { z }\\
  & - a _2 c _E E^\ast \frac{ x } { y }  - a _3 p _R R ^\ast \frac{ y } { z }  - a _1 \alpha _E T ^\ast \frac{ z } { x } - a _1 c _R R ^\ast \frac{ y } { x } - a _1 \beta q _E S ^\ast R ^\ast  \frac{ wy } { x }  :=G(w,x,y,z).
\end{align*}
As in \cite{li2012algebraic}, we define a set  $ \mathcal{D} $  of the above terms as follows
\[
    \mathcal{D}  = \left\{w,x,y,z,wy,\frac{ 1 } { w } , \frac{ x } { z } , \frac{ x } { y }, \frac{ y } { z }  , \frac{ z } { x } , \frac{ y } { x } , \frac{ w y } { x }  \right\}.
\]

There are at most five subsets associated with $ \mathcal{D} $ such that the product of all functions within each subset is equal to one, given by 
\[
    \left\{ w, \frac{ 1 } { w }  \right\},\left\{ \frac{x}{y}, \frac{ y } { x }  \right\}
    \left\{ \frac{x}{z}, \frac{ z } { x }  \right\},
    \left\{  \frac{ z } { x },\frac{ y } { z } , \frac{ x } { y }   \right\}
\left\{  \frac{ 1 } { w },\frac{ wy } { x } , \frac{ x } { y }   \right\}.
\]
We associate to these subsets of variables the following terms
\begin{align*} &  \left( 2 - w - \frac{ 1 } { w } \right) , \left( 2 - \frac{ x } { y } - \frac{ y } { x } \right),\left( 2 - \frac{ x } { z } - \frac{ z } { x } \right), \left( 3 - \frac{ z } { x } - \frac{ y } { z } - \frac{ x } { y } \right), \left( 3 - \frac{ 1 } { w } - \frac{ x } { y } - \frac{ w y } { x } \right).
\end{align*}
Following the method used in  \cite{li2012algebraic,lamichhane2015global} we  constructs a Lyapunov function as a linear combination of the terms above:
\begin{equation}
\begin{aligned} \label{eqn:derivative}
    H(w,x,y,z) =&  b _1 \left( 2 - w  - \frac{ 1 } { w } \right) + b _2 \left( 2 - \frac{ x } { y } - \frac{ y } { x } \right)
    + b _3 \left( 2 - \frac{ x } { z } - \frac{ z } { x } \right)\\
   &  + b _4 \left( 3 - \frac{ z } { x } - \frac{ y } { z } - \frac{ x } { y } \right) +
 b _5 \left( 3 - \frac{ 1 } { w } - \frac{ x } { y } - \frac{ w y } { x } \right),
\end{aligned}
\end{equation}
    where the coefficients $ b _1 , \ldots , b _5  $ are left unspecified.
   We want to determine suitable parameters $ a _i >0 $ ($ i = 1,2,3 $ ) and $ b _k \geq 0 $ ($ i =1,\ldots, 5 $ ) such that $ G (w,x,y,z) = H (w,x,y,z)$. Equating the coefficient of like terms in $G$ and $H$ gives the following equations:
   \begin{align*}
       w^0:&\quad 2(b _1 + b _2 + b _3 )+3(b _4 + b _5 )= C\\
       w:&\quad b _1 =( \mu + a _2 \beta q _R R ^\ast )S ^\ast \\
       wy:&\quad a _1 q _E + a _2 q _R - 1 = 0 \\
       x:&\quad - a _1 b _E + a _2 c _E + a _3 p _E = 0 \\
       y:&\quad S ^\ast \beta + a _1 c _R- a _2 b _R + a _3 p _R = 0 \\
       z:&\quad a _1 \alpha _E - a _3 b _T=0\\ 
       w^{-1}: &\quad b _1 + b _5 = \Lambda\\
       x z ^{ - 1 }:  &\quad b _3 = a _3 p _E E ^\ast \\
       xy ^{ - 1 }:  &\quad b _2 + b _4 + b _5 = a _2 c _E E ^\ast \\
       y z ^{ - 1 }: &\quad b _4 = a _3 p _R R ^\ast \\
       z x ^{ - 1 }:  &\quad b _3 + b _4 = a _1 \alpha _E T ^\ast \\
       y x ^{ - 1 } :  &\quad b _2 = a _1 c _R R ^\ast \\
        wy x ^{ - 1 } : &\quad b _5 = \beta a _1 q _E S ^\ast R ^\ast.
   \end{align*}
If we take $ (S ^\ast , E ^\ast , R ^\ast , T ^\ast) $  at the endemic equilibrium then the linear system above is consistent and has a unique solution with
\begin{align*}
    a _1 & = \frac{c_E}{ c _E q _E+ b _E q _R - \frac{ p _E } { b _T } \alpha _E q _R}\\
     a _2 & = \frac{ 1 } { q _R } -  \frac{ \frac{q_E}{q_R}c _E  }{ c _E q _E+ b _E q _R - \frac{ p _E } { b _T } \alpha _E q _R}\\
     a _3 & = \frac{\frac{c_E\alpha_E}{b_T}}{ c _E q _E+ b _E q _R - \frac{ p _E } { b _T } \alpha _E q _R},
\end{align*}
and with $ b _1 , \ldots , b _5 >0 $.
By  the arithmetic mean-geometric mean inequality each of the terms in \eqref{eqn:derivative} is less than or equal to zero. Furthermore,
\[ \mathcal{M} = \left\{ (S,E,R,T) \in \mathbb{R}  ^4 _{ >0 } |\, \frac{ d V } { dt } = 0 \right\}=\left\{ (S,E,R,T) \in \mathbb{R}  ^4 _{ >0 } |\, S = S ^\ast,
        \frac{ E } { E ^\ast } = \frac{ R } { R ^\ast } = \frac{ T } { T ^\ast} \right\}.\]
We claim that the  largest invariant set in $ \mathcal{M} $ is the set consisting of the endemic equilibrium $ x ^\ast $.  In fact, let $ (S(t),E(t),R(t),T(t)) $ be a  complete orbit in $ \mathcal{M} $, then
\[
    0=S'= (S ^\ast) ' =  \Lambda - \mu S ^\ast - \beta S ^\ast R,
\]
which implies that
\[
    R = \frac{ \Lambda - \mu S ^\ast } { \beta S ^\ast } = R ^\ast.
\]
Therefore, $x ^\ast =  (S(t),E(t),R(t),T(t)) $.
By LaSalle's invariance principle \cite{khalil1996noninear,la1976stability}, we deduce that all solutions of \eqref{eqn:modelb} that start in $ \mathbb{R}  ^4 _{ >0 }$ limit to  $  x ^\ast $. The fact that $ x ^\ast $ is globally asymptotically stable follows from a corollary to the invariance principle \cite{khalil1996noninear,la1976stability}.

\end{proof}

\renewcommand{\floatpagefraction}{0.1}
\begin{figure}[!p]
\begin{center}
\subfigure[]{\rotatebox{0}{\includegraphics[width=0.45 \textwidth,
height=45mm]{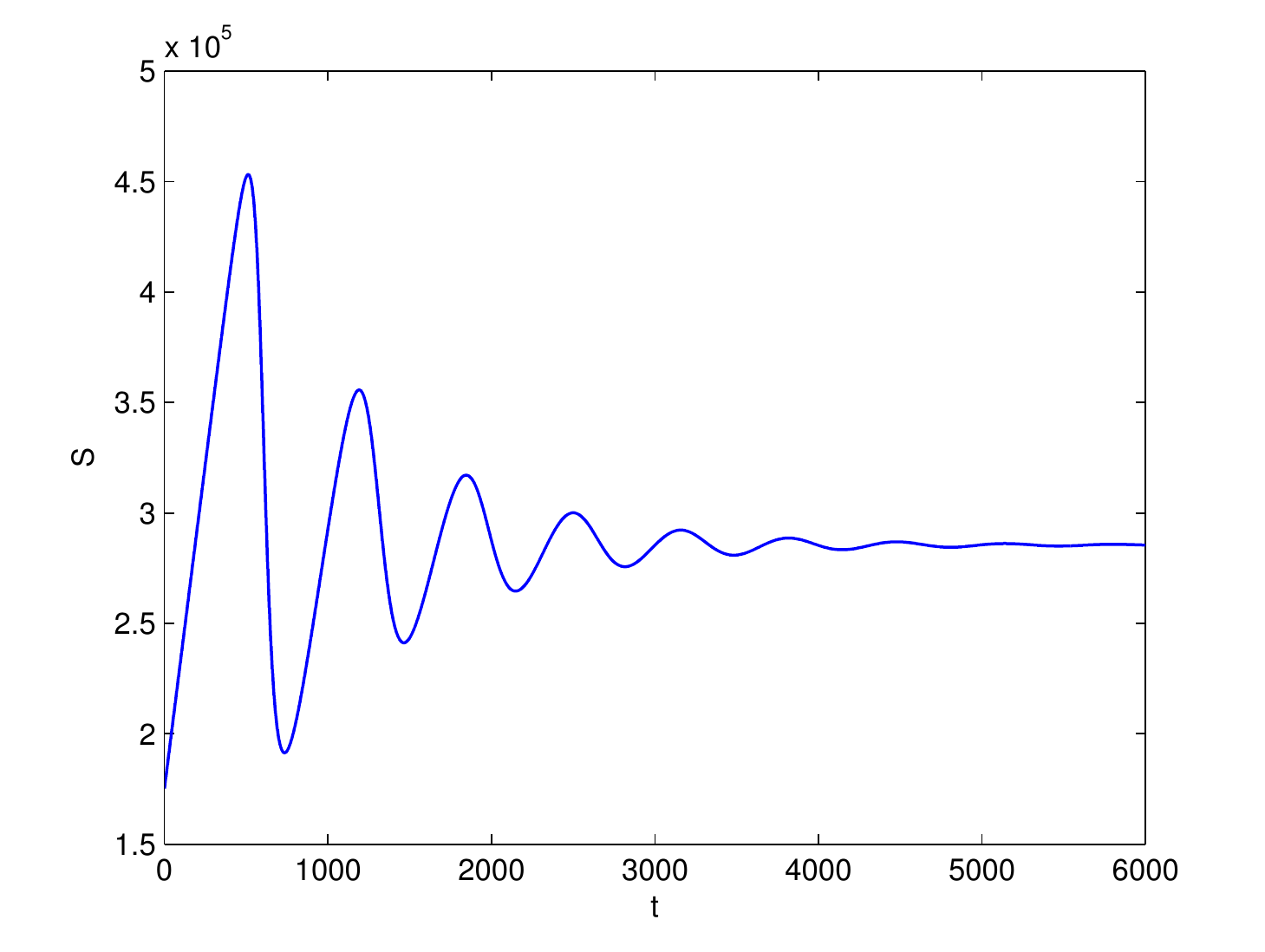}}}
\subfigure[]{\rotatebox{0}{\includegraphics[width=0.45 \textwidth,
height=45mm]{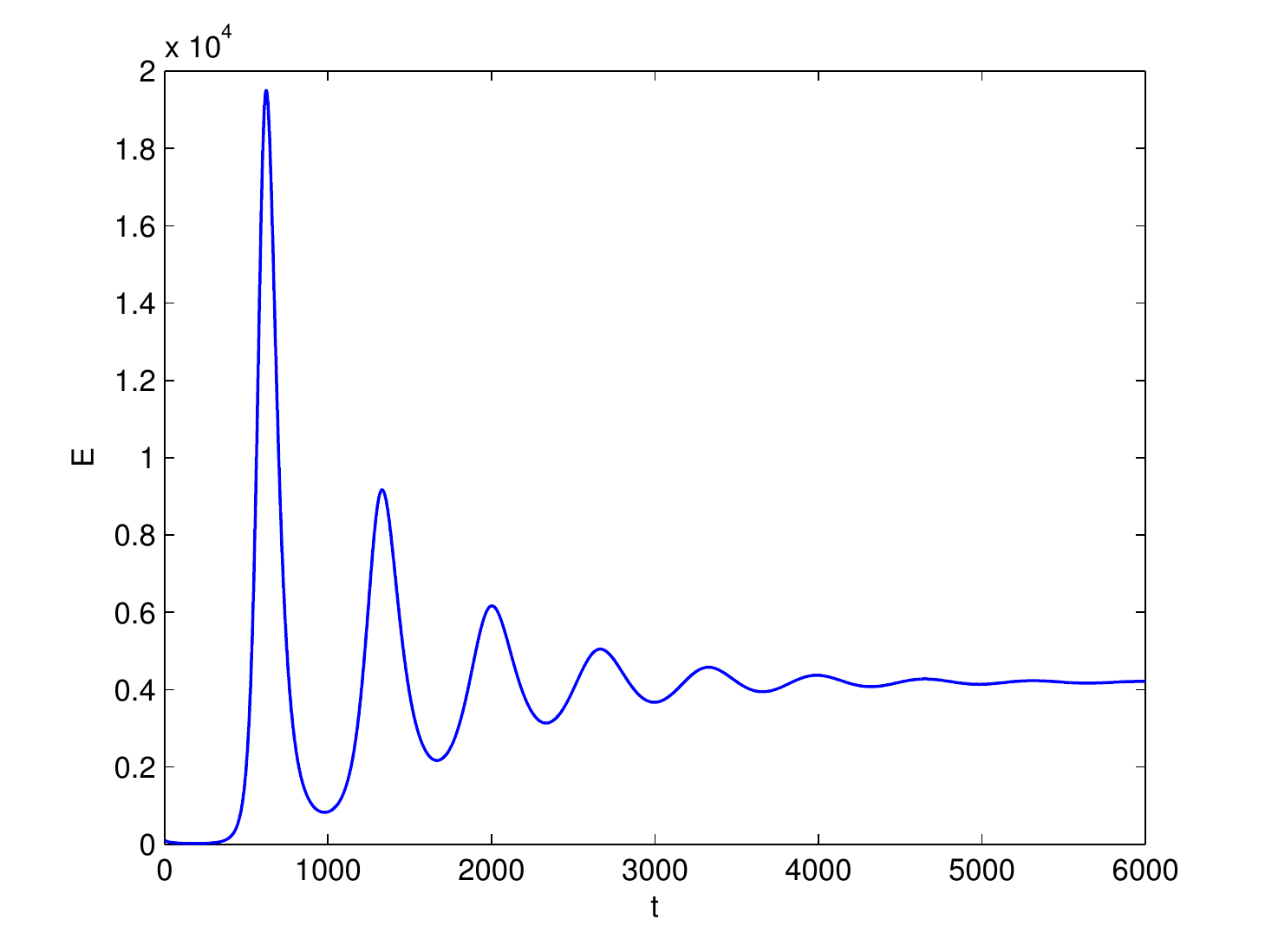}}}
\subfigure[]{\rotatebox{0}{\includegraphics[width=0.45 \textwidth,
height=45mm]{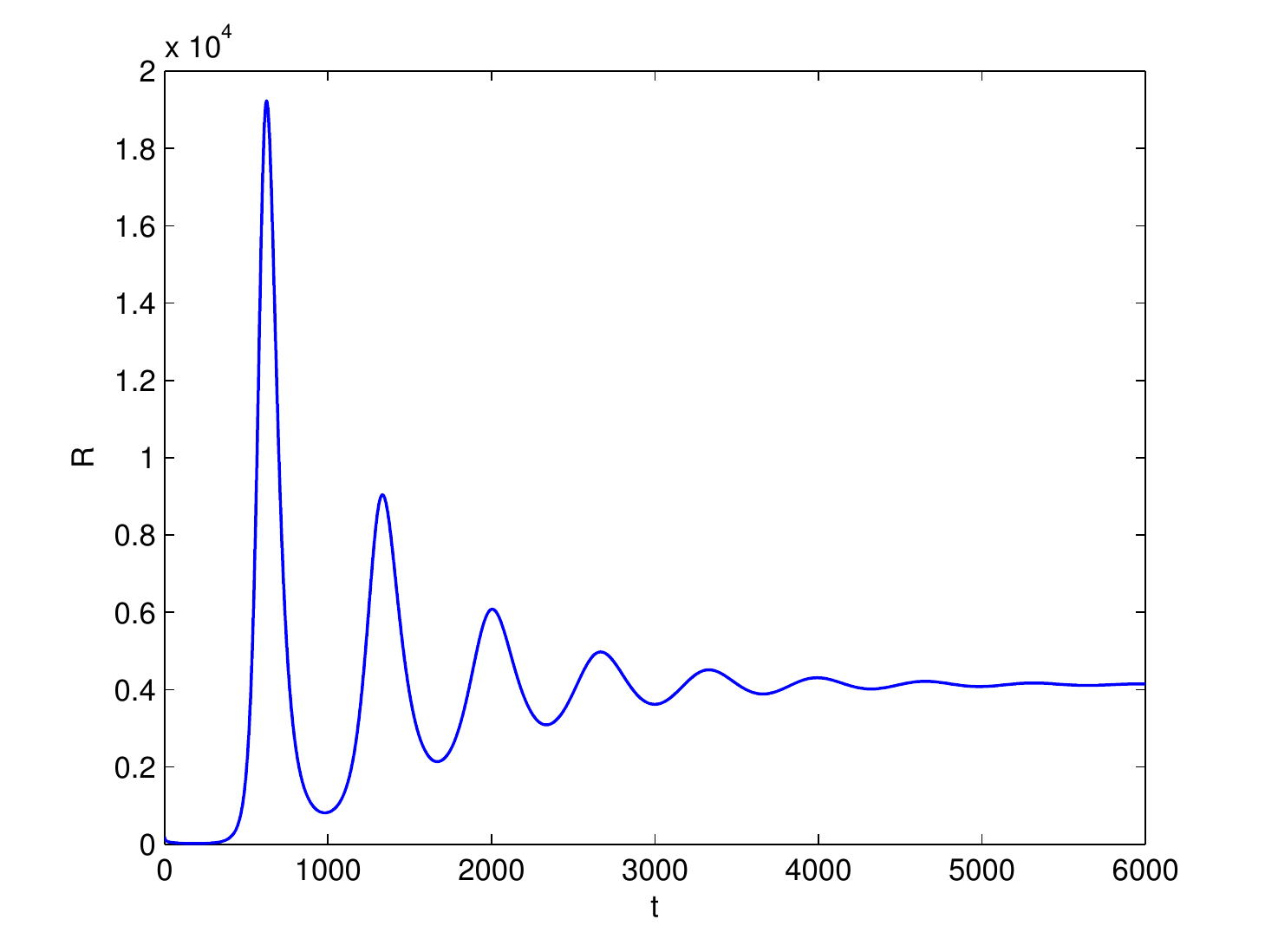}}}
\subfigure[]{\rotatebox{0}{\includegraphics[width=0.45 \textwidth,
height=45mm]{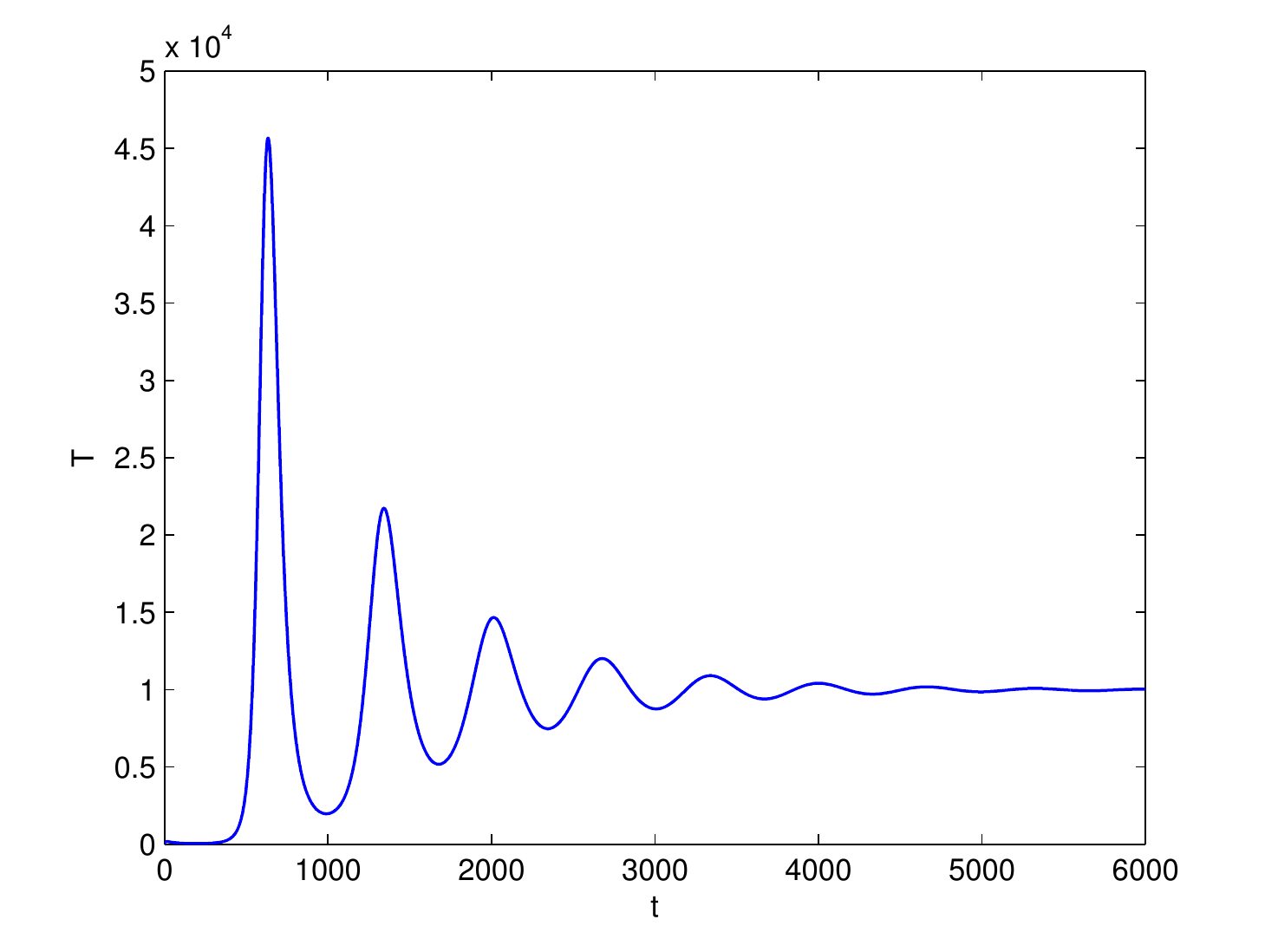}}}
\subfigure[]{\rotatebox{0}{\includegraphics[width=0.45 \textwidth,
height=45mm]{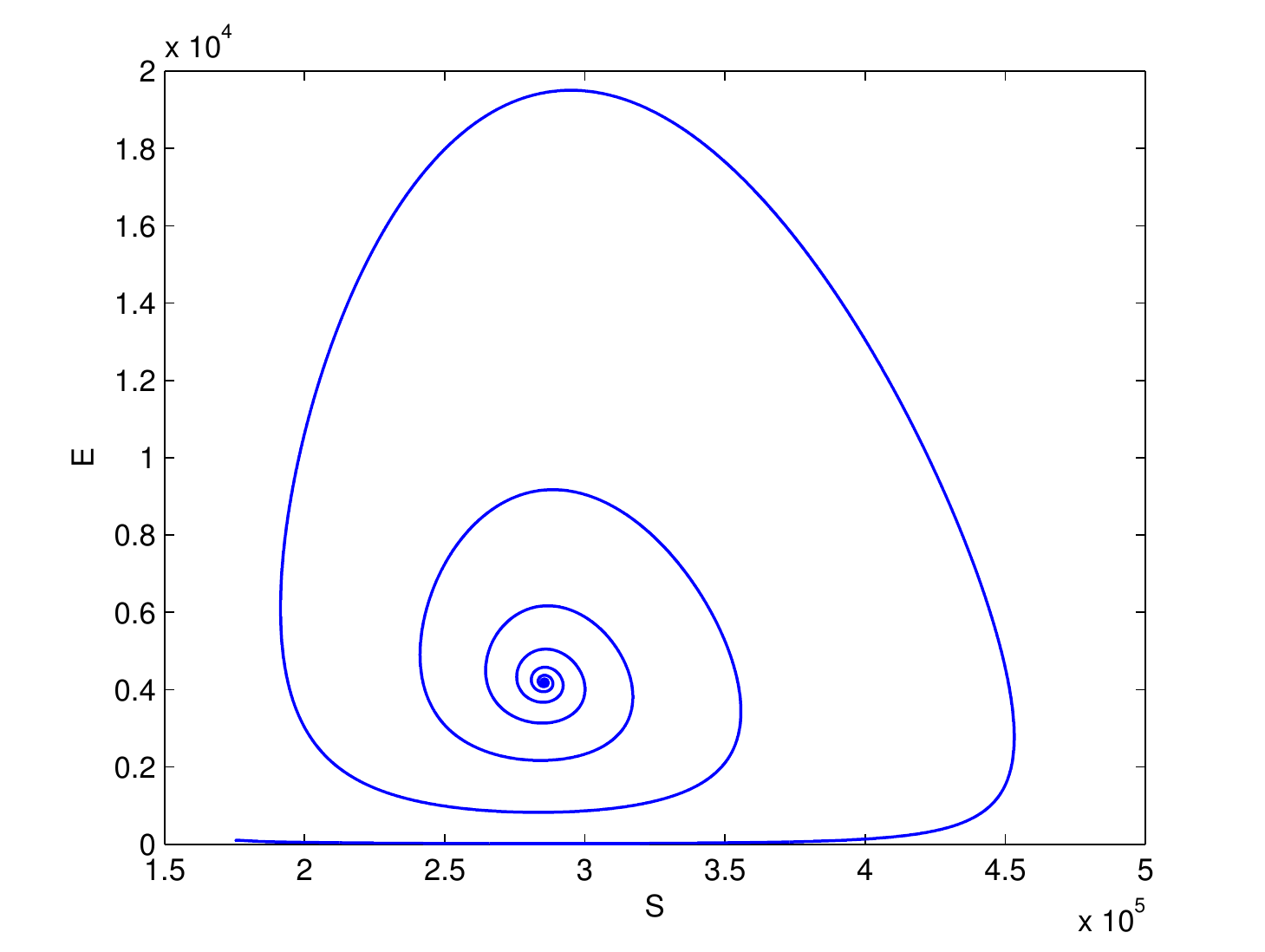}}}
\subfigure[]{\rotatebox{0}{\includegraphics[width=0.45 \textwidth,
height=45mm]{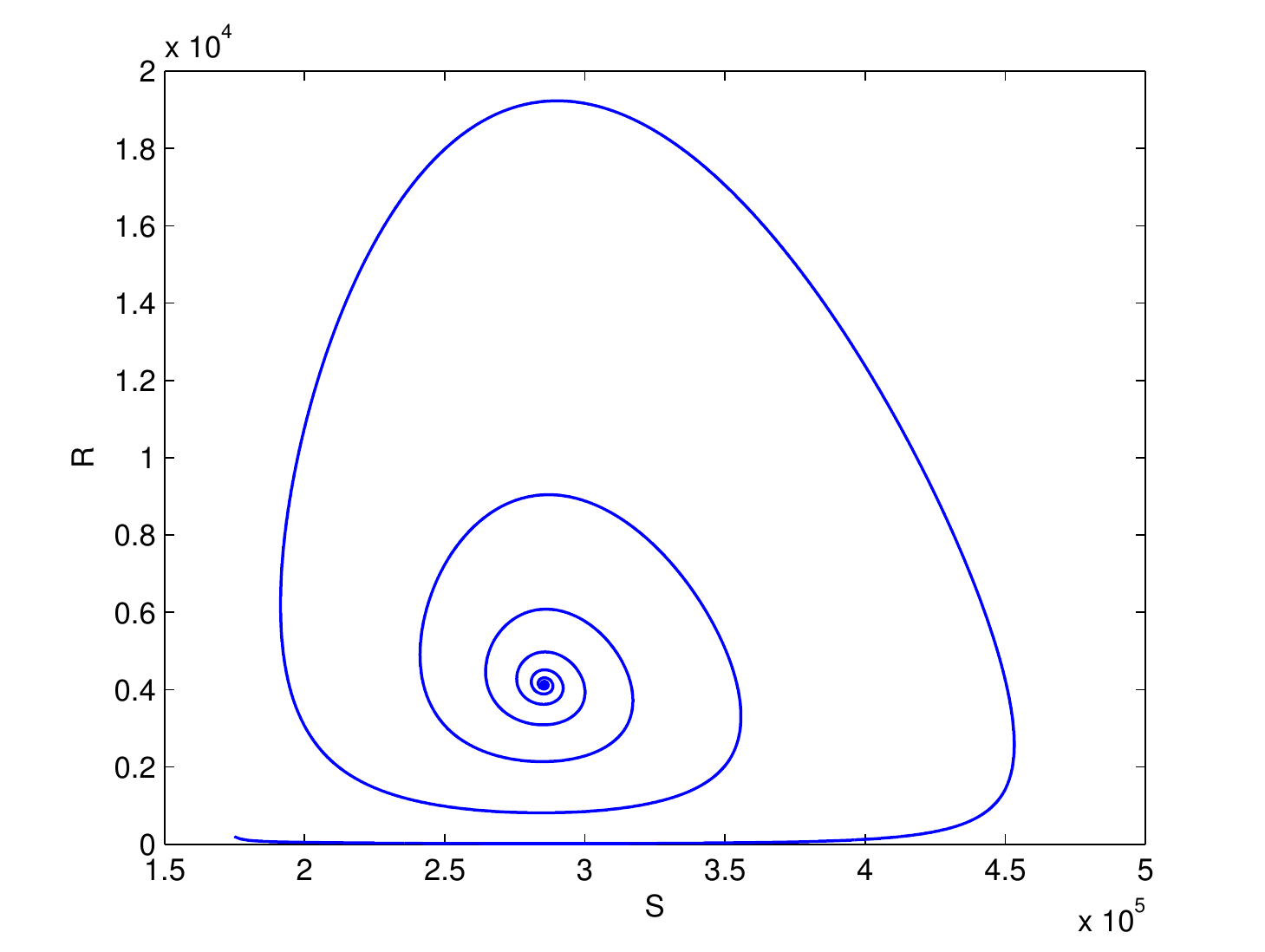}}}

 \vspace{-2mm}
 \caption{
\footnotesize Time history and phase portraits of system (\ref{eqn:modela}) for $\beta=0.0000005$,\,  
$q_E=0.86$,\, 
$d_E=0.0036$,\,
$d_R=0.0036$,\,
$p_E= 0.12$,\,
$p_R=0.12$,\,
$c_E=0.25$,\, 
$c_R=0.15$,\, 
$k=0.56$,\,
$\delta=0.1$,\, 
$\mu =0.000034247$,\,
$\Lambda=  600$ and 
$q_R=0.14$.}\label{fig:E1}
\end{center}
 \end{figure}


\section{Numerical Simulations}\label{sec:6}

In this section, we present some numerical simulations of system \eqref{eqn:modela}  to support our analytical results.

First, we choose $ \beta =0.0000005 $, $q_E  = 0.86 $, $d_E =0.0036 $, $d_R=0.0036 $, $p_E=p_R=0.12$, $c_E=0.25$, $c_R=0.15$, $k=0.56$, $\delta=0.1$, $\mu =0.000034247$, $\Lambda=600$, and $q_R=0.14$.
In this case we find that $ \mathcal{R} _0 =  5.039762256$, and thus, by Theorem \ref{thm:x_ast},  the endemic equilibrium  $ x ^\ast $ is globally asymptotically stable in $ \mathbb{R}  ^4 _{ >0 }$. Figures \ref{fig:E1} (a)-(d) depict $ S, E, R $, and $ T $ as a function of the time $ t $ (days), and show that after a few oscillations these populations approach a constant value. Figures 
\ref{fig:E1} (e) and (f), instead, are phase portraits obtained for different initial conditions. These two figures  confirm that  the solutions approach a globally asymptotically stable equilibrium point. This case illustrates the unwanted scenario where terrorists and recruiters become endemic to the population. 

Second, we increase the rates   $ p _E $ and $ p _R $  at which extremist and recruiters enter the $T$ compartment to   $p_E=p_R=0.92$ and leave the rest of the parameters unchanged. This can be viewed as an improvement of the de-radicalization programs. 
Figures \ref{fig:E0} (a)--(d) show that $ S, E, R, T \to 0  $, as the time  $ t $ grows large, confirming that $ x _0 $ is globally asymptotically stable. This is  the preferred situation, where extremists and recruiters die out in the long run.

 \begin{figure}[!h]
\begin{center}
\subfigure[]{\rotatebox{0}{\includegraphics[width=0.45 \textwidth,
height=45mm]{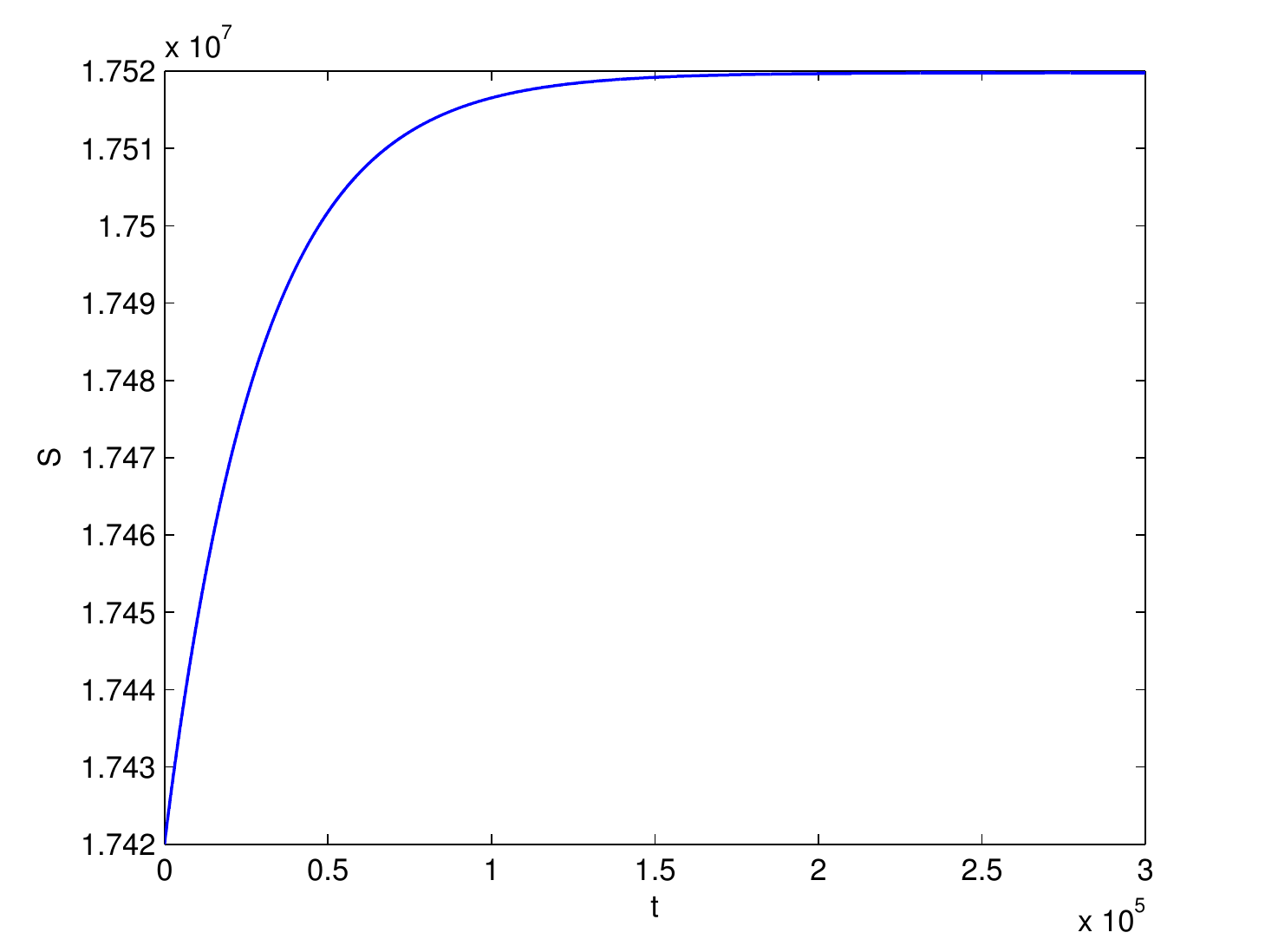}}}
\subfigure[]{\rotatebox{0}{\includegraphics[width=0.45 \textwidth,
height=45mm]{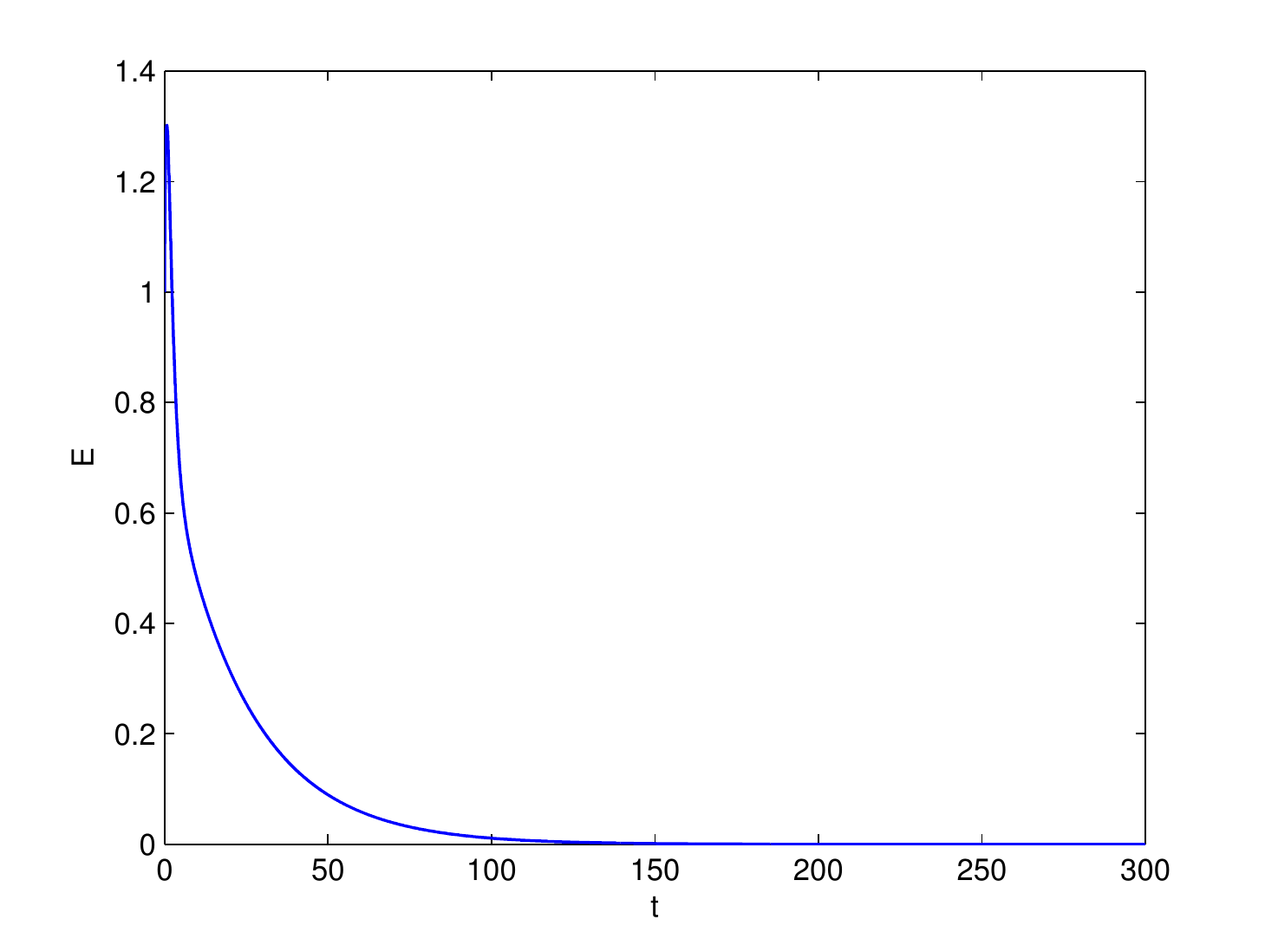}}}
\subfigure[]{\rotatebox{0}{\includegraphics[width=0.45 \textwidth,
height=45mm]{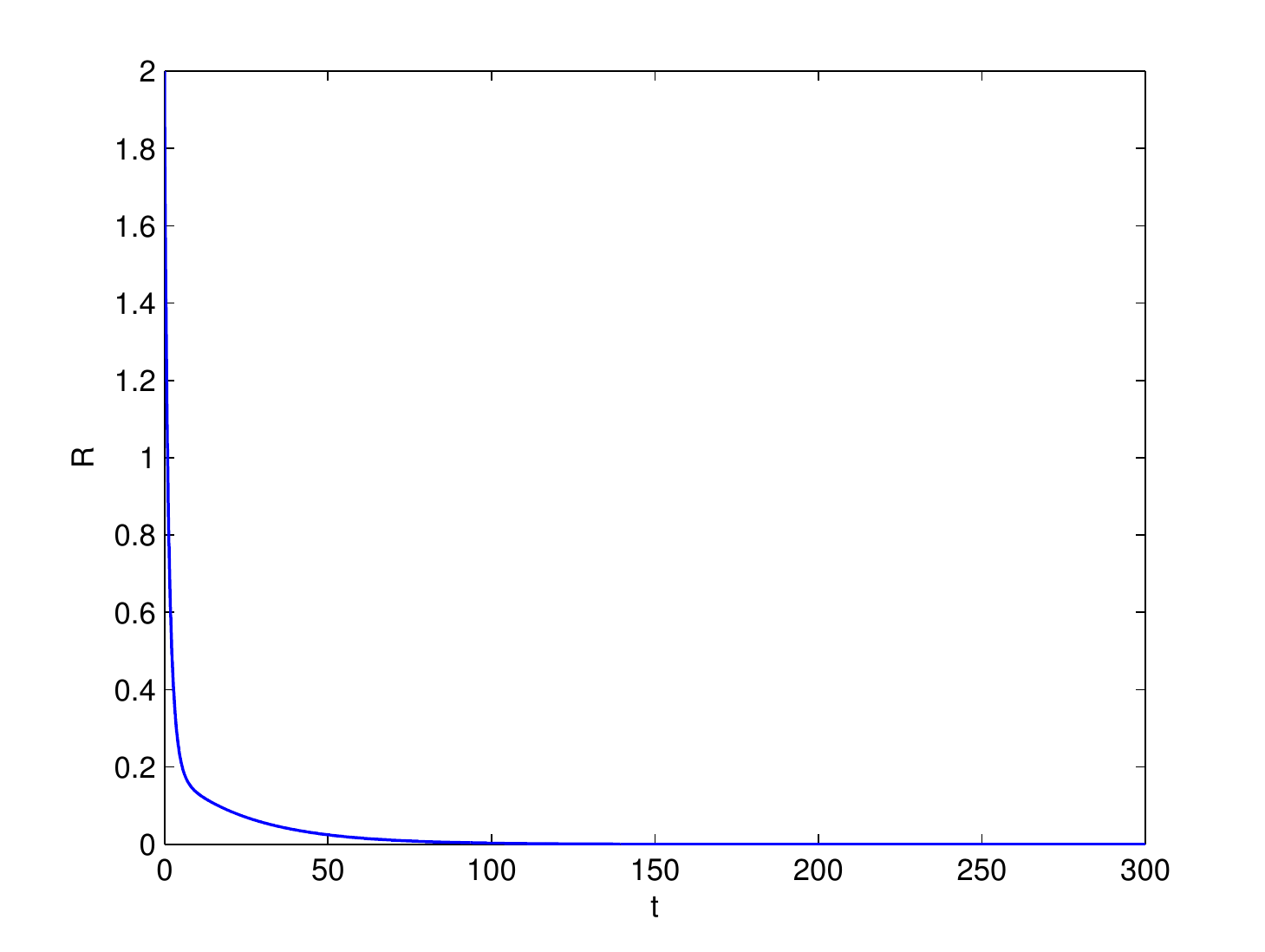}}}
\subfigure[]{\rotatebox{0}{\includegraphics[width=0.45 \textwidth,
height=45mm]{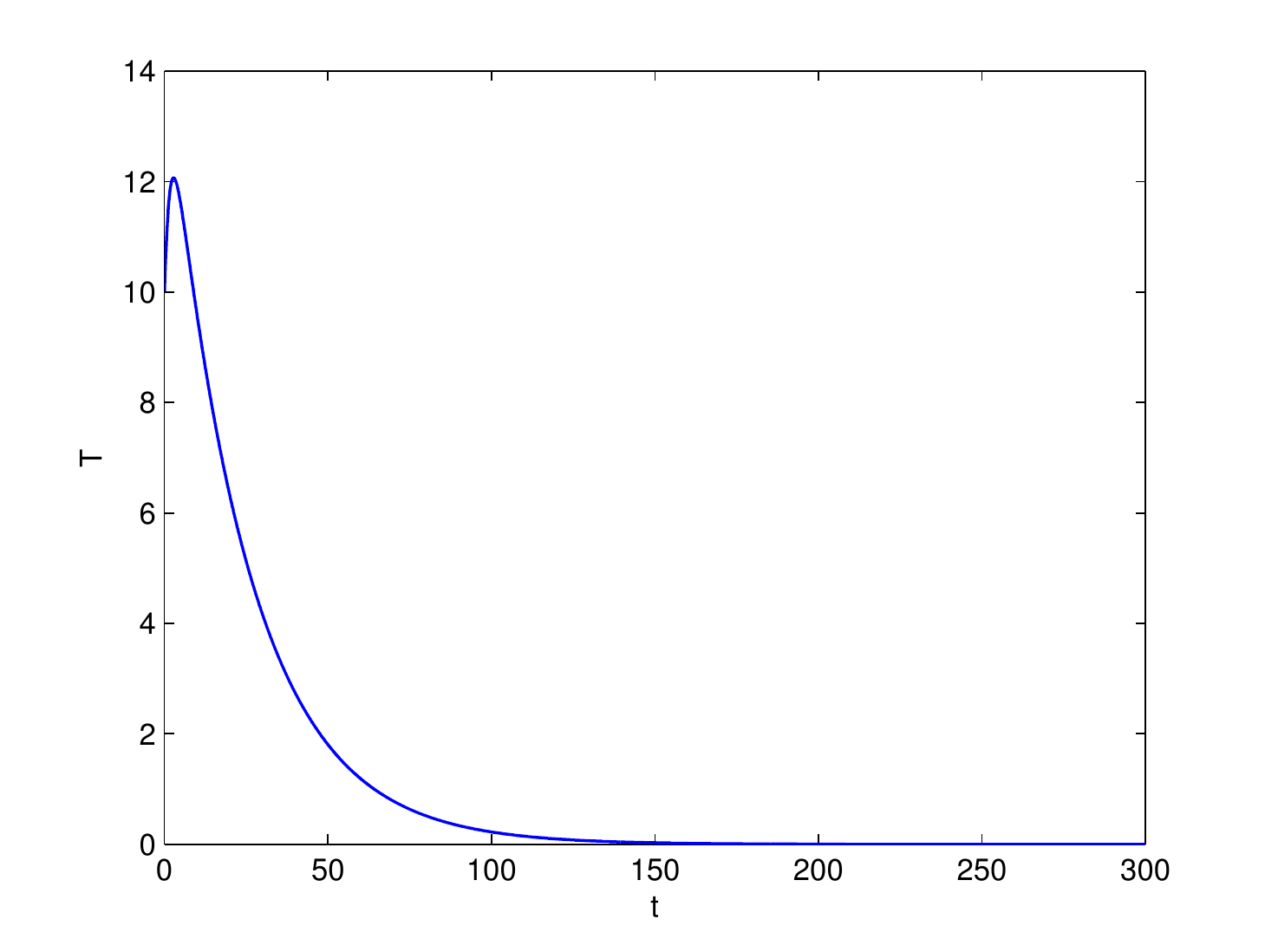}}}

 \vspace{-2mm}
 \caption{
\footnotesize Time history of system (\ref{eqn:modela}) for $\beta=0.00000005$,\,
$q_E=0.86$,\,
$d_E=0.0036$,\,
$d_R=0.0036$,\,
$p_E= 0.92$,\,
$p_R=0.92$,\,
$c_E=0.25$,\,
$c_R=0.15$,\,
$k=0.56$,\,
$\delta=0.1$,\,
$\mu =0.000034247$,\,
$\Lambda=  600$ and 
$q_R=0.14$.}\label{fig:E0}
\end{center}
 \end{figure}
\section{Discussion}\label{sec:7}
In this paper, we modified a  compartmental model of  radicalization proposed by McCluskey and Santoprete \cite{mccluskey2017bare} to include the  deradicalization process. By means of the next generation method we obtained the  basic reproduction number $ \mathcal{R} _0 $, which plays an important role in controlling the spread of the extremist ideology. By constructing two Lyapunov functions we studied  the global stability of the equilibria. We showed that this new model  displays a threshold dynamics. 
When  $ \mathcal{R} _0 \leq 1$ all solutions converge to the radicalization-free equilibrium, and the populations of recruiters and extremists eventually die out. When $ \mathcal{R} _0 > 1 $  the radicalization-free equilibrium is unstable and there is also an additional     endemic equilibrium that is  globally asymptotically stable. In this case extremists and recruiters will persist in the population.  
Since we expressed the basic reproduction number in terms of the parameters of the model we were able to evaluate strategies for countering violent extremism. These strategies were outlined in the introduction.

Of course, when modeling social dynamics one has to make many simplifying assumptions. The model studied in this paper is not completely free from this defect. One issue, for instance, is that extremists and recruiters entering the treatment compartment will stay in the compartment for a period of time, given by the length of the prison sentence or of the de-radicalization treatment. Hence, it seems possible to consider more realistic models by using delay differential equations, and include the time of the de-radicalization treatment as a time delay.  Another issue is that the population in the various compartments may not be homogeneous. For example, the parameter $ \beta $ may depend on the age of the susceptible, suggesting that an age-structured model may be better suited to describe this problem.
We plan to address these and other issues in future studies. 

 

\bibliography{rad_papers}{}

\begin{thebibliography}{10}

\bibitem{camacho2013development}
{\sc E.~T. Camacho}, {\em The development and interaction of terrorist and
  fanatic groups}, Communications in Nonlinear Science and Numerical
  Simulation, 18 (2013), pp.~3086--3097.

\bibitem{castillo2003models}
{\sc C.~Castillo-Chavez and B.~Song}, {\em Models for the transmission dynamics
  of fanatic behaviors}, Bioterrorism-mathematical modeling applications in
  homeland security. Philadelphia: SIAM,  (2003), pp.~155--172.

\bibitem{clutterbuck2015deradicalization}
{\sc L.~Clutterbuck}, {\em Deradicalization programs and counterterrorism: A
  perspective on the challenges and benefits}, Understanding Deradicalization:
  Pathways to Enhance Transatlantic Common Perceptions and Practices,  (2015).

\bibitem{crisosto2010community}
{\sc N.~M. Crisosto, C.~M. Kribs-Zaleta, C.~Castillo-Ch{\'a}vez, and
  S.~Wirkus}, {\em Community resilience in collaborative learning}, Discrete
  and Continuous Dynamical Systems Series B, 14 (2010), pp.~17--40.

\bibitem{galam2016modeling}
{\sc S.~Galam and M.~A. Javarone}, {\em Modeling radicalization phenomena in
  heterogeneous populations}, PloS one, 11 (2016), p.~e0155407.

\bibitem{haddad2010nonnegative}
{\sc W.~M. Haddad, V.~Chellaboina, and Q.~Hui}, {\em Nonnegative and
  compartmental dynamical systems}, Princeton University Press, 2010.

\bibitem{hayward1999mathematical}
{\sc J.~Hayward}, {\em Mathematical modeling of church growth}, The Journal of
  mathematical sociology, 23 (1999), pp.~255--292.

\bibitem{horgan2009walking}
{\sc J.~Horgan}, {\em Walking away from terrorism: accounts of disengagement
  from radical and extremist movements}, Routledge, 2009.

\bibitem{horgan2014makes}
\leavevmode\vrule height 2pt depth -1.6pt width 23pt, {\em What makes a
  terrorist stop being a terrorist?}, Journal for Deradicalization,  (2014),
  pp.~1--4.

\bibitem{jeffs2016activist}
{\sc R.~A. Jeffs, J.~Hayward, P.~A. Roach, and J.~Wyburn}, {\em Activist model
  of political party growth}, Physica A: Statistical Mechanics and its
  Applications, 442 (2016), pp.~359--372.

\bibitem{khalil1996noninear}
{\sc H.~K. Khalil}, {\em Nonlinear Systems}, Prentice-Hall, 1996.

\bibitem{la1976stability}
{\sc J.~P. La~Salle}, {\em The stability of dynamical systems}, SIAM, 1976.

\bibitem{lamichhane2015global}
{\sc S.~Lamichhane and Y.~Chen}, {\em Global asymptotic stability of a
  compartmental model for a pandemic}, Journal of the Egyptian Mathematical
  Society, 23 (2015), pp.~251--255.

\bibitem{li2012algebraic}
{\sc J.~Li, Y.~Xiao, F.~Zhang, and Y.~Yang}, {\em An algebraic approach to
  proving the global stability of a class of epidemic models}, Nonlinear
  Analysis: Real World Applications, 13 (2012), pp.~2006--2016.

\bibitem{mastroe2016surveying}
{\sc C.~Mastroe and S.~Szmania}, {\em Surveying cve metrics in prevention,
  disengagement and deradicalization programs}, Report to the Office of
  University Programs, Science and Technology Directorate, Department of
  Homeland Security. College Park, MD: START,  (2016).

\bibitem{mccluskey2017bare}
{\sc C.~McCluskey and M.~Santoprete}, {\em A bare-bones mathematical model of
  radicalization}, arXiv preprint arXiv:1711.03227,  (2017).

\bibitem{mcmillon2014modeling}
{\sc D.~McMillon, C.~P. Simon, and J.~Morenoff}, {\em Modeling the underlying
  dynamics of the spread of crime}, PloS one, 9 (2014), p.~e88923.

\bibitem{mohammad2017analysis}
{\sc F.~Mohammad and U.~M. Roslan}, {\em Analysis on the crime model using
  dynamical approach}, in AIP Conference Proceedings, vol.~1870, AIP
  Publishing, 2017, p.~040067.

\bibitem{nizamani2014public}
{\sc S.~Nizamani, N.~Memon, and S.~Galam}, {\em From public outrage to the
  burst of public violence: An epidemic-like model}, Physica A: statistical
  mechanics and its applications, 416 (2014), pp.~620--630.

\bibitem{ohalloran2017challenges}
{\sc P.~O'Halloran}, {\em The challenges of evaluating attitudinal change: A
  case study of the effectiveness of international countering violent extremim
  ({CVE}) programs}, preprint,  (2017).

\bibitem{romero2011epidemiological}
{\sc D.~M. Romero, C.~M. Kribs-Zaleta, A.~Mubayi, and C.~Orbe}, {\em An
  epidemiological approach to the spread of political third parties}, Discrete
  and Continuous Dynamical Systems-Series B (DCDS-B), 15 (2011), pp.~707--738.

\bibitem{selim2016approaches}
{\sc G.~Selim}, {\em Approaches for countering violent extremism at home and
  abroad}, The ANNALS of the American Academy of Political and Social Science,
  668 (2016), pp.~94--101.

\bibitem{sooknanan2013catching}
{\sc J.~Sooknanan, B.~Bhatt, and D.~Comissiong}, {\em Catching a gang--a
  mathematical model of the spread of gangs in a population treated as an
  infectious disease}, International Journal of Pure and Applied Mathematics,
  83 (2013), pp.~25--43.

\bibitem{sooknanan2016modified}
\leavevmode\vrule height 2pt depth -1.6pt width 23pt, {\em A modified
  predator--prey model for the interaction of police and gangs}, Royal Society
  open science, 3 (2016), p.~160083.

\bibitem{van2002reproduction}
{\sc P.~Van~den Driessche and J.~Watmough}, {\em Reproduction numbers and
  sub-threshold endemic equilibria for compartmental models of disease
  transmission}, Mathematical biosciences, 180 (2002), pp.~29--48.

\end{thebibliography}
\bibliographystyle{siam}

\end{document}